\documentclass[a4paper,UKenglish,cleveref, autoref]{article}

\usepackage{multirow,array,booktabs}

\usepackage{amssymb}

\usepackage{amsthm}
\newtheorem{theorem}{Theorem}
\newtheorem{proposition}{Proposition}
\newtheorem{definition}{Definition}
\newtheorem{remark}{Remark}

\bibliographystyle{plainurl}

\usepackage{url}

\title{LP-based algorithms for multistage minimization problems} 

\author{Evripidis Bampis$^1$, Bruno Escoffier$^{1,2}$, Alexander Kononov$^3$\\
$ $\\
$^1$ Sorbonne Universit\'e, CNRS, LIP6 UMR 7606,\\ 4 place Jussieu, 75005 Paris\\
$^2$ Institut Universitaire de France \\
$^3$ Sobolev Institute of Mathematics and Novosibirsk State University}

\date{}


\begin{document}

\maketitle

\begin{abstract}
  We consider a \emph{multistage} framework introduced recently (Eisenstat et al., Gupta et al., both in ICALP 2014), where given  a time horizon $t = 1, 2, \ldots,T$,  the input is a sequence of instances of a (static) combinatorial optimization problem $I_1,I_2,\ldots,I_T$, (one for each time step), and the goal is to find a sequence of solutions $S_1,S_2,\ldots,S_T$ (one for each time step) reaching  a tradeoff between the quality of the solutions in each time step and the stability/similarity of the solutions in consecutive time steps. For several polynomial-time solvable problems, such as Minimum Cost Perfect Matching, the multistage variant becomes hard to approximate (even for two time steps  for Minimum Cost Perfect Matching). In this paper, we study the multistage variants of some important discrete minimization problems (including Minimum Cut, Vertex Cover, Set Cover, Prize-Collecting Steiner Tree, Prize-Collecting Traveling Salesman). We focus on the natural question of whether linear-programming-based methods may help in developing good approximation algorithms in this framework. We first show that Min Cut remains polytime solvable in its multistage variant, and Vertex Cover remains 2-approximable, as particular case of a more general statement which easily follows from the work of (Hochbaum, EJOR 2002) on monotone and IP2 problems.
  Then, we tackle other problems and for this we introduce a new two-threshold rounding scheme, tailored for multistage problems. As a first application, we show that this rounding scheme gives a 2$f$-approximation algorithm for the multistage variant of the $f$-Set Cover problem, where each element belongs to at most $f$ sets. More interestingly, we are able to use our rounding scheme in order to propose 
  a 3.53-approximation algorithm for the multistage variant of the Prize-Collecting Steiner Tree problem, and a 3.034-approximation algorithm for the multistage variant of the Prize-Collecting Traveling Salesman problem.


\end{abstract}


\section{Introduction} In many applications, data are evolving with time and the systems have to be adapted  in order to take into account this evolution by providing near optimal solutions over the time. However, moving from a solution to a new one, may induce non-negligible transition costs. This cost may represent e.g. the cost of turning on/off the servers in a data center \cite{Albers17}, the cost of changing the quality level in video streaming \cite{Joseph}, or the cost for turning on/off nuclear plants in electricity production \cite{thesececile}. Various models and  algorithms have been proposed for modifying (re-optimizing) the current solution by making as few changes
as possible (see \cite{Anthony, Blanchard, Cohen, Gu, Megow, Nagarajan} and the references therein). 
In this paper, we follow a new trend, popularized by the works  of Eisenstat
et al. \cite{Eisenstat} and Gupta et al. \cite{Gupta}, known as  the \emph{multistage model}: Given a time horizon $t = 1, 2, \ldots,T$ and a  sequence of instances $I_1,I_2,\ldots,I_T$, (one for each time step),  the goal is to find a sequence of solutions $S_1,S_2,\ldots,S_T$ (one for each time step) optimizing the quality of the solution in each time step and the stability (transition cost or profit) between solutions in consecutive time steps. 
Surprisingly, even in the \emph{offline case}, where the sequence of instances is known in advance,  some  classic combinatorial optimization problems become much harder in the multistage model \cite{An, Bampis, Eisenstat, Gupta}. For instance,  
the Minimum Cost Perfect Matching problem, which is polynomially-time solvable in the one-step case, becomes hard to approximate even for bipartite graphs and for only two time steps \cite{Bampis}. 
In a more recent work, Fluschnik et al. \cite{Fluschnik} study multistage optimization problems from a complexity parameterized point of view, focusing on the multistage Vertex Cover problem.

In this article, we  focus on the \emph{complexity} and the \emph{approximability} of various multistage minimization problems including Min Cut, Vertex Cover, 
Prize-Collecting Steiner Tree and Prize-Collecting Traveling Salesman.
The central question that we address in this work is to what extend linear-programming based methods can be used in the multistage framework. Arguably one of the main techniques for the design of approximation algorithms for usual (static) problems, LP-based methods have already been fruitfully applied in the multistage setting for facility location problems in \cite{Eisenstat,An}.


\noindent
{\bf Our contribution.}
We first note that the multistage variants of monotone minimization problems (as defined in \cite{Hochbaum}) such as Min Cut, or IP2-non-monotone binarized minimization problems \cite{Hochbaum} such as Vertex Cover, remain monotone and IP2-non-monotone, respectively. Hence, the multistage variants of monotone static (one-step) problems can be solved in polynomial time, while the multistage variants of IP2-non-monotone static problems have the half-integrality property and consequently, they can be solved by a  2-approximation algorithm. Though obtained using simple arguments, we quote these results as they contrast with the several hardness results for multistage versions of classical problems such as Spanning Tree or Minimum Cost Perfect Matching. Indeed, Min Cut is, to the best of our knowledge, the first problem that is shown to remain polytime solvable in the multistage setting. Also, multistage Vertex Cover has the same approximation guarantee (2-approximation) as the static version (the ratio being tight under UGC).

Then, we focus on minimization problems which are neither monotone nor IP2. We introduce a new rounding scheme, called two-threshold rounding scheme, designed for multistage problems in the sense that it is able to take into account both transition costs and individual costs of solutions when rounding. We show a first application of this rounding scheme leading to a 2$f$-approximation algorithm for the multistage variant of the $f$-Set Cover problem (Set Cover where each element belongs to at most $f$ sets). As main results of this article, we then give a more involved use of this rounding scheme for a multistage variant of two prize-collecting problems: the multistage Prize-Collecting Steiner Tree, for which we obtain a 3.53-approximation algorithm, and the multistage Prize-Collecting Traveling Salesman problem, for which we obtain a 3.034-approximation algorithm.

\subsection{Related Work}

\emph{Multistage framework.} The multistage model considered in this paper has been introduced in Eisenstat
et al. \cite{Eisenstat} and Gupta et al. \cite{Gupta}. Eisenstat
et al. \cite{Eisenstat} studied the multistage version of facility location problems for which they proposed logarithmic approximation algorithms. An et al. \cite{An} obtained constant factor approximation algorithms for some related problems. Interestingly, these results are obtained using LP-based techniques, and in particular (randomized) roundings. These roundings are tailored for facility location problems, and thus quite different from the approach followed in this article.
 
 Gupta et al. \cite{Gupta} studied the multistage  Maintenance Matroid problem for
both the offline and the online settings. They presented a logarithmic approximation
algorithm for this problem, which includes as a special case a natural multistage version of the 
Spanning Tree problem. They also considered the online version of the problem and they devised an efficient randomized competitive algorithm against any oblivious adversary. The same paper also introduced the study of the multistage Minimum Cost Perfect Matching problem for which they proved that it is hard to approximate 
 even for a constant number of stages.  Bampis et al. \cite{Bampis} improved this negative result by showing that the problem is
hard to approximate even for bipartite graphs and for the case of two time steps. 
Olver et al.~\cite{Olver} studied a multistage version of the Minimum Linear Arrangement problem, which is related to a variant of the List Update problem~\cite{Sleator}, and provided a logarithmic lower bound for the online version and a polylogarithmic upper bound for the offline version.
Very recently, Fluschnik et al. \cite{Fluschnik} study the multistage Vertex Cover problem under a parameterized complexity perspective. They show that the multistage (offline) problem is computationally hard in fairly restricted cases and prove fixed-parameter tractability results for some natural parameterizations in some restricted cases.

Some multistage maximization problems have been studied as well. Let us quote the multistage Max-Min Fair Allocation
problem  \cite{Bampis+}, which is a multistage variant of the Santa Klaus problem \cite{Maxim}. Constant factor approximation algorithms are obtained for the off-line setting. 
More recently, in \cite{Bampis++}, a multistage variant of the Knapsack problem has been studied. A PTAS has been proposed and it has been proved that there is no FPTAS for the problem even in the case of two time steps, unless $P=NP$. 
In \cite{Bampis+++},  general techniques for a family of online multistage problems, called Subset Maximization, are developed and thereby characterizing the models (given by the type of data evolution and the type of similarity measure) that admit a constant-competitive online algorithm. 

Finally, let us mention the works by Buchbinder et
al. \cite{Buchbinder} and Buchbinder, Chen and Naor \cite{Buchbinder+} who considered also a multistage model and studied the relation between the online learning and competitive analysis frameworks, mostly for fractional optimization problems.\\ 

\noindent
\emph{Static framework.} 
In \cite{Hochbaum}, Hochbaum described  an easily recognizable class of integer programming problems, called \emph{monotone} problems, and proposed an algorithm that solves these problems in polynomial time, even in the case where the objective function is nonlinear. She also considered a large class of nonmonotone problems and she proposed a polynomial time algorithm that finds (fractional) optimal solutions that are half integral. These solutions can be used in order to devise 2-approximate solutions. In addition, the proposed 2-approximations are the best results that one can hope for these problems, unless there is a better approximation for Vertex Cover.  

For the Prize-Collecting Traveling Salesman problem that has been introduced in \cite{Balas}, Bienstock et al. \cite{Bienstock} proposed a 2.5-approximation algorithm. They have also proposed a 3-approximation algorithm for the Prize-Collecting Steiner Tree problem. Both algorithms use a rounding procedure of appropriate linear programming relaxations. Goemans and Williamson \cite{GW} improved these results providing a 2-approximation for both problems, based on the primal-dual scheme. Then, Archer et al. \cite{Archer}
devised a $(2-\epsilon)$-approximation algorithm for both problems. 
All these results hold for both the \emph{rooted} and \emph{unrooted} versions of the problems. This is true since there are approximation-preserving reductions from the  rooted to the unrooted version of the problems, and vice versa \cite{Archer}.

\section{Monotone and IP2 minimization problems}

\noindent {\bf Multistage problem and ILP formulation}

Let us consider a minimization problem where we want to minimize some linear cost function $\sum_i c_i x_i$ under some set of constraints defined as $Ax\geq b$ and $x_i\in \{0,1\}$. In the multistage model defined in \cite{Eisenstat,Gupta}, we are given a sequence of $T$ instances of the problem, i.e., $T$ cost functions $c_i^t$, $t=1,\dots,T$, and $T$ matrices $A_t$ and $b_t$, $t=1,\dots,T$. The goal is to find a sequence of feasible solutions $x^t,t=1,\dots,T$ (i.e., $A_t x^t \geq b_t$ and $x^t\in \{0,1\}^n$), so as to minimize an objective function which is the sum of:
\begin{itemize}
    \item The costs of individual solutions, i.e., $\sum_t \sum_i c^t_ix^t_i$;
    \item A transition cost: each time $x^t_i$ differs from $x^{t-1}_i$, this induces a cost $C^t$. Thus the global transition cost is $\sum_{t=1}^{T-1} C^t \sum_i |x^{t+1}_i-x^{t}_i|$.\\
\end{itemize}

This can be easily modeled as an ILP as follows.    

\begin{eqnarray*}
     \ \left \{ \begin{array}{ll}
    \min \ \sum_{t=1}^T\sum_{i=1}^n  c^t_{i}x^t_i+\sum_{t=1}^{T-1}C^t\sum_{i=1}^n (z^t_i+z'^t_i)   \\
    s.t. \left |
    \begin{array}{llllll}
    A_tx^t & \geq &b_t & \forall t\in \{1,\dots,T\}\\
    z^t_i & \geq & x^t_i-x^{t+1}_i & \forall t\in \{1,\dots,T-1\}, \forall{i} \in \{1,\dots,n\} \\
    z'^t_i & \geq & x^{t+1}_i - x^t_i & \forall t\in \{1,\dots,T-1\}, \forall{i} \in \{1,\dots,n\} \\
    x^t_{i}, z^t_i & \in & \{0,1\} \\
    \end{array}
    \right.
    \end{array} 
    \right.
    \end{eqnarray*}

As mentioned in introduction, several such multistage problems are hard to approximate, even when the underlying problem is polynomial. Here, we note that for some other problems the structure of the ILP formulation keep some properties in the multistage setting fitting the setting of Hochbaum~\cite{Hochbaum}, leading for instance to an exact (polytime) algorithm for multistage Min Cut or a 2-approximation for multistage Vertex Cover. 

\subsection{Minimum cut and monotone problems}

Let us first consider the (static) Min Cut problem, where we are given a graph $G=(V,E,s,p,c)$, two distinguished vertices $s$ and $p$ of $V$, and a function $c:E\rightarrow N$ associating a weight to each edge of the graph. A cut $(A,B)$ of $G$ is a partition of $V$ into $A$ and $B=V-A$ such that $s\in A$ and $p\in B$. The cost of the cut $(A,B)$ is $c(A,B)=\sum_{u \in A}\sum_{v \in B} c(u,v)$. A minimum cut in a graph is a cut whose cost is minimum over all cuts of the graph.

In the multistage version, as mentioned above, we consider that at every time step $t$, we are given a graph $G^t=(V,E^t,s,p,c^t)$, where the edge set and the cost of the edges may change over time. We denote by $E^t$ the edge set of the graph at time $t$ and by $c_{ij}^t$ the cost of edge $e=(i,j)\in E^t$. We denote by $v_i^t$ the vertex $v_i$ in $G^t$. There is a transition cost, $C^t$, for every vertex changing its partition-set from time $t$ to time $t+1$. 

We can solve the problem in polynomial time: indeed, the multistage problem on $T$ instances can be seen as a single (static) Min Cut problem on (nearly) $Tn$ vertices. To see this, we construct a new graph $G^{MS}$ in which we copy the $T$ graphs $G^1,G^2,\ldots,G^T$ and we connect  vertex $v_i^t$ to vertex $v_i^{t+1}$, for every $t=1,2,\ldots, T-1$ with an edge of weight $C^t$. We also add two vertices $\mathcal{S}$ and $\mathcal{T}$. We connect $\mathcal{S}$ (resp. $\mathcal{T}$) with every vertex $s^i$ (resp. $p^i$), for $i=1,2,\ldots,T$, with an edge $(\mathcal{S},s^i)$ (resp. $(p^i, \mathcal{T})$) of  weight  $\infty$. Then, it is sufficient to determine a minimum cut between $S$ and $\mathcal{T}$ in $G^{MS}$. Clearly, the cost of the minimum cut in $G^{MS}$ has a cost equal to the total cost, i.e. the sum of the cost of the cuts at every time step and of the transition cost between consecutive instances.

Another way to prove this result is to formulate the multistage Min Cut problem as an ILP and then prove that every optimal solution of the relaxation of this ILP is always an integer solution. In order to do that, let us first write the multistage Min Cut problem as an ILP where $x^t_i=1$ if vertex $v_i^t$ belongs to subset $A$, 0 otherwise. Also, let $z^t_i=1$ if $x^t_i \neq x^{t+1}_i$, 0 otherwise.

   \begin{eqnarray*}
     \ \left \{ \begin{array}{ll}
    \min \ \sum_{t=1}^T\sum_{e=(i,j)\in E^t} (f^t_{ij}+f'^t_{ij}) c^t_{ij}+\sum_{t=1}^{T-1}C^t\sum_{i=1}^n (z^t_i+z'^t_i)   \\
    s.t. \left |
    \begin{array}{llllll}
    f^t_{ij} & \geq &x_i^t-x_j^t & \forall t\in \{1,\dots,T\}, \forall{e=(i,j)} \in E^t\\
    f'^t_{ij} & \geq &x_j^t-x_i^t & \forall t\in \{1,\dots,T\}, \forall{e=(i,j)} \in E^t\\
    z^t_i & \geq & x^t_i-x^{t+1}_i & \forall t\in \{1,\dots,T-1\}, \forall{i} \in \{1,\dots,|V|\} \\
    z'^t_i & \geq & x^{t+1}_i - x^t_i & \forall t\in \{1,\dots,T-1\}, \forall{i} \in \{1,\dots,|V|\} \\
    x^1_{\mathcal{S}}=1,  x^T_{\mathcal{T}}=0 \\
    x^t_{i}, z^t_i \in \{0,1\} \\
    \end{array}
    \right.
    \end{array} 
    \right.
    \end{eqnarray*}

In order to show that the relaxation of this IPL is integral, we use the result of Hochbaum \cite{Hochbaum} that a class of minimization integer programming problems, known as {\em monotone}, are solvable in polynomial time. The problems in this class are characterized by constraints of the form $ax-by \leq c+z$, where $a,b\geq 0$ and the variable $z$ appears only in that constraint. The direction of the inequality is not important and the coefficients $a$ and $b$
can be any real number if $b\geq 1$. The objective function is unrestricted, but the function of $z$ must be convex. It is not difficult to verify that the ILP of the multistage Min Cut problem is monotone. It is important here to notice that a minimization problem which is monotone in the static case continues to be monotone in the multistage framework. This is a consequence of the fact that the constraints containing the transition variables and the way the transition
cost is added in the objective function induce a  monotone integer program. 

\begin{proposition}
Multistage Min Cut, as well as the multistage version of any monotone  minimization problem, is polynomial time solvalbe. 
\end{proposition}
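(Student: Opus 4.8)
The plan is to invoke Hochbaum's theorem~\cite{Hochbaum}, which guarantees that any \emph{monotone} minimization integer program can be solved in polynomial time, and to reduce the proposition to the single claim that the multistage ILP built from a monotone static problem is again monotone. Recall that a constraint is in monotone form when it can be written as $ax-by\leq c+z$ with $a,b\geq 0$, where the auxiliary variable $z$ occurs in no other constraint and contributes a convex function to the objective. So it suffices to display the multistage program in this shape.

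First I would treat the original constraints. For each fixed time step $t$, the system $A_t x^t\geq b_t$ is exactly the constraint set of the static instance, which is monotone by hypothesis; since these constraints involve only the variables $x^t$ of a single step and do not couple distinct steps, they carry over to the multistage program without any change of form.

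Next I would verify the transition constraints, which are the only inequalities linking consecutive steps. The constraint $z^t_i\geq x^t_i-x^{t+1}_i$ rewrites as $x^t_i-x^{t+1}_i\leq z^t_i$, matching $ax-by\leq c+z$ with $a=b=1$, $c=0$ and auxiliary variable $z=z^t_i$; the symmetric inequality $z'^t_i\geq x^{t+1}_i-x^t_i$ is handled identically with auxiliary variable $z'^t_i$. Finally, the objective is the sum of the per-step costs (which coincide with the static objective and are therefore admissible) and the transition penalty $\sum_t C^t\sum_i(z^t_i+z'^t_i)$; as $C^t\geq 0$, each auxiliary variable contributes a linear, hence convex, term. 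This exhibits the multistage ILP as monotone, so Hochbaum's algorithm solves it in polynomial time. For multistage Min Cut in particular, the static formulation is already monotone---the constraints $f^t_{ij}\geq x^t_i-x^t_j$ are of the required form and the edge costs $c^t_{ij}$ are non-negative---so it is subsumed by the general statement.

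The one delicate point, and hence the step I would check most carefully, is the requirement that each auxiliary variable appear in exactly one constraint. This is what would fail if transition variables were shared across inequalities or reused between steps; it holds here precisely because we create a fresh pair $z^t_i, z'^t_i$ for every index $(i,t)$, each confined to its own transition inequality and otherwise appearing only (linearly, with coefficient $C^t\geq 0$) in the objective. Once this bookkeeping, together with the convexity of the transition penalty, is confirmed, monotonicity of the whole program---and thus the claimed polynomial-time solvability---follows directly from~\cite{Hochbaum}.
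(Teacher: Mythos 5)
Your proof is correct and follows essentially the same route as the paper: it verifies that the multistage ILP of a monotone static problem remains monotone in Hochbaum's sense (per-step constraints unchanged, transition constraints of the form $ax-by\leq c+z$ with fresh auxiliary variables $z^t_i, z'^t_i$ each confined to a single constraint and contributing convex terms to the objective), and then invokes Hochbaum's polynomial-time algorithm. The paper additionally sketches a direct combinatorial reduction of multistage Min Cut to a single static Min Cut instance on a layered graph, but its proof of the general statement is exactly the monotonicity argument you give.
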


Note that the same holds in the more general case where the transition costs also depend on the vertex, i.e., there is a cost $C^t_i$ for changing decision about vertex $v_i$ from time $t$ to $t+1$.

\subsection{Vertex Cover and IP2 problems}

As it is well known, the relaxation of the standard formulation of the minimum Vertex Cover problem has the semi-integrality property, meaning that any extremal solution  has coordinates in $\{0,1/2,1\}$. This has been generalized by Hochbaum \cite{Hochbaum} in a class of nonmonotone minimization integer programs, known as IP2. These optimization problems have constraints of the type $ax+by \leq z+c$ without restriction in the sign of $a$ and $b$. Some NP-hard problems can be modeled this way, but in the case where all coefficients in the constraint matrix are in $\{-1,0,1\}$, the IP2 problem is said to be {\em binarized}. For binarized IP2 integer programs, Hochbaum \cite{Hochbaum} propose a polynomial time algorithm that provides half integral (fractional) optimal solutions for the continuous relaxation. When the half integral solution can be rounded to a feasible integral solution, this gives a 2-approximation polynomial time algorithm. As for monotone integer programs, it is easy to see that the multistage variant of an IP2 problem belongs to the class of IP2 problems and hence, a 2-approximation algorithm exists for all these problems, including for instance multistage Vertex Cover.

\begin{proposition}
Multistage Vertex Cover, as well as the multistage version of any minimization binarized IP2 integer problems, has a 2-approximation algorithm. 
\end{proposition}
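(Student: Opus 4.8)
The plan is to verify that the ILP of the multistage problem is itself a binarized IP2 program and then invoke Hochbaum's algorithm \cite{Hochbaum} directly. Starting from a static minimization problem that is binarized IP2, each of its constraints $A_t x^t \geq b_t$ involves at most two decision variables with coefficients in $\{-1,0,1\}$; since these constraints couple only variables of a single time step $t$, they retain exactly this form in the multistage ILP. It therefore remains to check the coupling constraints introduced by the transition cost. Rewriting $z^t_i \geq x^t_i - x^{t+1}_i$ as $x^t_i - x^{t+1}_i \leq z^t_i$, and similarly $x^{t+1}_i - x^t_i \leq z'^t_i$, each inequality is precisely of the IP2 shape $ax+by \leq z+c$ with $a,b\in\{-1,1\}$, $c=0$, and with the auxiliary variable ($z^t_i$ or $z'^t_i$) playing the role of the special per-constraint variable $z$ that occurs in no other constraint. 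Hence every constraint of the multistage ILP is binarized IP2, and its objective $\sum_t\sum_i c^t_i x^t_i + \sum_t C^t\sum_i (z^t_i + z'^t_i)$ is linear, in particular convex and with nonnegative coefficients since $c^t_i, C^t \geq 0$.

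By Hochbaum's result I would then obtain, in polynomial time, a half-integral optimal solution $(x,z,z')$ of the continuous relaxation, so that all coordinates lie in $\{0,1/2,1\}$. The rounding I propose is to push everything up: set $\hat{x}^t_i = 1$ whenever $x^t_i \geq 1/2$ and $\hat{x}^t_i = 0$ otherwise, and round every half-integral $z^t_i, z'^t_i$ up to $1$. Feasibility of the static constraints is exactly the roundability hypothesis on the underlying problem, which holds for Vertex Cover (rounding a half-integral vertex cover up yields a vertex cover, since each edge constraint $x^t_u + x^t_v \geq 1$ has an endpoint valued at least $1/2$). For the approximation guarantee, only coordinates equal to $1/2$ change, each passing from $1/2$ to $1$ and thus at most doubling its nonnegative contribution, while coordinates equal to $0$ or $1$ are untouched; consequently the rounded integral vector has cost at most twice the relaxation value, itself a lower bound on the optimum, giving the claimed $2$-approximation.

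The step that needs genuine care, and which I expect to be the main obstacle, is feasibility of the \emph{transition} constraints after rounding: unlike the plain up-rounding argument for a single covering problem, here one must confirm that raising the $x$ variables does not violate the newly added inequalities tying consecutive time steps together. This is settled by monotonicity of up-rounding together with the half-integrality structure. Indeed, if the returned value $z^t_i = 0$, the relaxed constraint already forces $x^t_i \leq x^{t+1}_i$, and up-rounding is monotone, so $\hat{x}^t_i \leq \hat{x}^{t+1}_i$ and hence $\hat{x}^t_i - \hat{x}^{t+1}_i \leq 0 = \hat{z}^t_i$; otherwise $z^t_i \in \{1/2,1\}$ rounds to $\hat{z}^t_i = 1 \geq \hat{x}^t_i - \hat{x}^{t+1}_i$. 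The symmetric argument handles $z'^t_i$, so the rounded solution is feasible. Everything else reduces to recognizing the IP2 pattern and quoting the algorithm of \cite{Hochbaum}, with multistage Vertex Cover as the concrete instance where the roundability condition is met.
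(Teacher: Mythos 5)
Your proposal is correct and follows essentially the same route as the paper: observe that the transition constraints $x^t_i - x^{t+1}_i \leq z^t_i$ (and symmetrically for $z'^t_i$) keep the multistage ILP in Hochbaum's binarized IP2 class, apply her algorithm to get a half-integral optimal relaxation solution, and round up to lose at most a factor $2$. The only difference is that you spell out the feasibility of the transition constraints after up-rounding (via monotonicity of the threshold rounding when $z^t_i=0$), a detail the paper leaves implicit; this is a welcome addition rather than a divergence.
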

As previsouly, the same holds in the model where the transition cost also depend on the vertex $v_i$/variable $x_i$.

Note that for Vertex Cover this is the best we can hope for since even the static version is hard to approximate within ratio $2-\epsilon$ under UGC \cite{KhotR08}.


\section{Rounding scheme}

We now tackle problems which are neither monotone nor IP2. We propose a new rounding method, called 2-threshold rounding scheme, which allows to take into account both individual costs of solutions and transition costs in the multistage setting. We first show in Section~\ref{sec:fsc} an easy application of this rounding to the $f$-Set Cover problem (Set Cover where each element appears in at most $f$ sets), leading to a $2f$-approximation algorithm for the multistage version of this problem. We then show in Sections~\ref{sec:st} and \ref{sec:tsp} how to use the 2-threshold rounding scheme for a multistage version of the two problems Prize-Collecting Steiner Tree and Prize-Collecting Traveling Salesman. We obtain respectively a 3.53- and a 3.034-approximation algorithm.\\

Let us now introduce the rounding scheme. A classical rounding scheme for a static problem expressed as an ILP is as follows: starting from an optimal solution $\hat{x}=(\hat{x}_1,\dots,\hat{x}_n)$ of a continuous relaxation, fix a threshold $\alpha\in (0,1)$, and fix $x_i=1$ if $\hat{x}_i \geq \alpha$, and  $x_i=0$ otherwise. This gives for instance an $f$-approximation algorithm for $f$-Set Cover. However, this rounding is not suitable for multistage problem as it may induce very large transition cost: if $\hat{x}_i^t=\alpha$ and $\hat{x}_i^{t+1}=\alpha-\epsilon$, then $|x^{t+1}_i-x^t|=1$ while $|\hat{x}^{t+1}_i-\hat{x}^t|=\epsilon$.

To overcome this, we introduce the following 2-threshold rounding scheme.
\begin{definition}(2-threshold rounding scheme)
Let $x=(x^1,\dots,x^T)$ a sequence of $T$ values in $[0,1]$, and $\alpha,\beta$ two parameters with $1\geq \alpha \geq \beta \geq 0$. Then \texttt{RS}$(x,\alpha,\beta)$ is the sequence $y=(y^1,\dots,y^T)$ of $T$ values in $\{0,1\}$ defined as:
\begin{itemize}
	\item If $x^t\geq \alpha$ then $y^t=1$.
	\item If $x^t\leq \beta$ then $y^t=0$.
	\item  Consider a maximal interval $[t_1,t_2]$ with $\beta<x^t< \alpha$ for all $t\in[t_1,t_2]$. Then: if simultaneously (1) $t_1=1$ or $x^{t_1-1} \geq \alpha$, and (2) $t_2=T$ or $x^{t_2+1}\geq \alpha$, then fix $y^t=1$ for all $t\in [t_1,t_2]$. Otherwise ((1) or (2) is not verified) then fix $y^t=0$ for all $t\in [t_1,t_2]$.
\end{itemize} 
\end{definition}

As we will see, these two thresholds allow to bound both transition costs and individual costs of solutions when rounding.

\subsection{Rounding scheme for $f$-Set Cover}\label{sec:fsc}

We consider the Set Cover problem where we are given a ground set $\mathcal{C}=\{c_1,\dots,c_n\}$, and a collection $\mathcal{S}=\{S_1,\dots,S_m\}$ of subsets of $\mathcal{C}$. Each set $S_i$ has a nonnegative weight $w_i$. The goal is to find a subcollection of $\mathcal{S}$ of minimum weight such that each element of the ground set is in at least one of the chosen sets. The $f$-Set Cover problem corresponds to instances where each element is in at most $f$ sets. 
Note that for any $f$, Set Cover is not $(f-1-\epsilon)$-approximable if $P\neq NP$ \cite{DinurGKR05}, and not $(f-\epsilon)$-approximable under UGC \cite{KhotR08}. In the sequel we show that a general multistage version of the problem is $2f$-approximable.\\

In the multistage version, we consider that a set $S_i$ may change over time (it may contain $c_j$ at time $t$ but not at time $t+1$), and its weight may also change: we denote $S^t_i$ the set $S_i$ at time $t$, and $w^t_i$ its weight. We get a penalty $p_i$ if we change our decision about set $S_i$ between time $t$ and $t+1$. 

We write this problem as an ILP, where $x^t_i=1$ if set $S_i$ is taken at time $t$, 0 otherwise. $z^t_i=1$ if $x^t_i=x^{t+1}_i$, 0 otherwise.

   \begin{eqnarray*}
     \ \left \{ \begin{array}{ll}
    \min \ \sum_{t=1}^T\sum_{i=1}^m w^t_i x^t_{i}+\sum_{t=1}^{T-1}\sum_{i=1}^m p_i z^t_i   \\
    s.t. \left |
    \begin{array}{llllll}
    \sum_{i: c_j\in S^t_i} x^t_{i} & \geq &1 & \forall t\in \{1,\dots,T\}, \forall{c_j} \in \mathcal{C}\\
    z^t_i & \geq & x^t_i-x^{t+1}_i & \forall t\in \{1,\dots,T-1\}, \forall{i} \in \{1,\dots,m\} \\
    z^t_i & \geq & x^{t+1}_i - x^t_i & \forall t\in \{1,\dots,T-1\}, \forall{i} \in \{1,\dots,m\} \\
    x^t_{i}, z^t_i \in \{0,1\} \\
    \end{array}
    \right.
    \end{array} 
    \right.
    \end{eqnarray*}
The first constraint ensures that each element is covered at each time step. In $f$-Set Cover instances, this constraint involves at most $f$ variables. Notice that we can also allow the ground set to change.\\ 

Let us consider the following algorithm \texttt{RS-MSC} (Rouding Scheme for multistage $f$-Set Cover):
\begin{enumerate}
    \item Compute an optimal (fractional) solution  $(\hat{x},\hat{z})$ of the relaxation of the ILP formulation.
    \item For each set $S_i$, apply the 2-threshold rounding scheme on $\hat{x_i}$ with parameters $\alpha= \frac{1}{f}$ and $\beta=\frac{1}{2f}$, i.e. let $\tilde{x}_i=$\texttt{RS}$(\hat{x}_i,\frac{1}{f},\frac{1}{2f})$. This defines a solution where $S_i$ is taken at time $t$ iff $\tilde{x}_i^t=1$.
\end{enumerate}







\begin{theorem}
\texttt{RS-MSC} is a $2f$-approximation algorithm for the $f$-Set Cover problem. 
\end{theorem}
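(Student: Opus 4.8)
The plan is to compare the cost of the rounded solution $\tilde{x}$ produced by \texttt{RS-MSC} against the optimal fractional cost $\mathrm{OPT}_f = \sum_t \sum_i w_i^t \hat{x}_i^t + \sum_t \sum_i p_i \hat{z}_i^t$, and to show $\mathrm{cost}(\tilde{x}) \leq 2f \cdot \mathrm{OPT}_f \leq 2f \cdot \mathrm{OPT}$. Two things must be verified: (a) feasibility, i.e. every element is covered at every time step; and (b) the cost bound, split into the individual set-weight part and the transition part. The point of the two-threshold scheme with $\alpha = 1/f$, $\beta = 1/(2f)$ is exactly that these two parts can be charged separately against the two parts of $\mathrm{OPT}_f$ with a common factor $2f$.

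**Feasibility.**

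First I would prove feasibility. Fix a time $t$ and an element $c_j$, covered by at most $f$ sets. The LP constraint $\sum_{i: c_j \in S_i^t} \hat{x}_i^t \geq 1$ with at most $f$ terms forces some set $S_i$ containing $c_j$ to have $\hat{x}_i^t \geq 1/f = \alpha$. The key observation is that \texttt{RS} never rounds a coordinate with $\hat{x}_i^t \geq \alpha$ down to $0$: the first bullet sets $y^t = 1$ whenever $x^t \geq \alpha$. Hence $\tilde{x}_i^t = 1$ for that set, so $c_j$ is covered at time $t$. This holds for every $(j,t)$, giving a feasible multistage solution.

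**Bounding the individual cost.**

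For the weight part, the crucial structural fact about \texttt{RS} I would establish is: whenever $\tilde{x}_i^t = 1$, we must have $\hat{x}_i^t \geq \beta = 1/(2f)$. Indeed, $\tilde{x}_i^t=1$ arises either from $\hat{x}_i^t \geq \alpha \geq \beta$, or from a maximal interval in $(\beta,\alpha)$ that got rounded up, where again $\hat{x}_i^t > \beta$. Consequently $\tilde{x}_i^t \leq \hat{x}_i^t / \beta = 2f\,\hat{x}_i^t$, and summing over $t,i$ with weights $w_i^t \geq 0$ gives $\sum_t\sum_i w_i^t \tilde{x}_i^t \leq 2f \sum_t\sum_i w_i^t \hat{x}_i^t$.

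**Bounding the transition cost — the main obstacle.**

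The harder part is the transition cost: I must show $\sum_t\sum_i p_i \tilde{z}_i^t \leq 2f \sum_t\sum_i p_i \hat{z}_i^t$, i.e. per set $S_i$, the number of rounded transitions $|\tilde{x}_i^{t+1}-\tilde{x}_i^t|$ is at most $2f \sum_t |\hat{x}_i^{t+1}-\hat{x}_i^t|$ (using $\hat{z}_i^t = |\hat{x}_i^{t+1}-\hat{x}_i^t|$ at optimality since the $z$-variables are minimized). This is where the careful interval-rounding rule matters. Fixing one set $i$ and writing $x^t = \hat{x}_i^t$, I would argue that each rounded transition — a place where $\tilde{x}^t$ flips between $0$ and $1$ — can be associated with the fractional value crossing the band $[\beta,\alpha]$. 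The interval rule is designed so that a ``plateau'' of intermediate values surrounded by high endpoints does not create spurious flips (it is rounded uniformly to $1$), so every rounded flip forces the fractional trajectory $x^t$ to move by at least $\alpha - \beta = 1/(2f)$ either into or out of the band; charging each unit of rounded transition to the corresponding fractional movement of size $\geq 1/(2f)$ yields the factor $2f$. The delicate case analysis is verifying that the ``otherwise'' branch (rounding a whole intermediate interval down to $0$ when an endpoint condition fails) still pairs each created flip with a genuine fractional crossing of the full band width, with no double-counting across adjacent intervals; this bookkeeping is the technical heart of the proof. Combining the weight bound and the transition bound gives $\mathrm{cost}(\tilde{x}) \leq 2f\,\mathrm{OPT}_f \leq 2f\,\mathrm{OPT}$, completing the argument.
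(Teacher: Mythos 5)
Your proposal is correct and follows essentially the same route as the paper's proof: feasibility via the LP constraint forcing some $\hat{x}_i^t \geq 1/f=\alpha$, the individual-cost bound from $\tilde{x}_i^t \leq \hat{x}_i^t/\beta = 2f\hat{x}_i^t$, and the transition-cost bound by charging each rounded flip to a window in which the fractional trajectory crosses the band of width $\alpha-\beta = 1/(2f)$. The ``delicate bookkeeping'' you defer (disjointness of the charging windows and the case analysis of the interval rule) is handled in the paper at essentially the same level of detail, by observing that a $0\to 1$ flip can only occur between a time with $\hat{x}_i^t\leq \beta$ and a later time with $\hat{x}_i^{t'}\geq\alpha$ with all intermediate values strictly inside the band, so these windows for successive flips do not overlap.
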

\begin{proof}
We first show that the solution is feasible: for each element $c$ and each $t$, there exists $S_i$ containing $c$ such that $\hat{x}_i^t\geq \frac{1}{f}$ (feasibility constraint). Then the rounding scheme fix $\tilde{x}_i^t=1$, i.e., $S_i$ is taken at time $t$ and $c$ is covered. 

Now let us show the approximation ratio. First notice that if $\hat{x}_i^t\leq \frac{1}{2f}$ then $\tilde{x}_i^t=0$, so we have for any $i,t$ $\tilde{x}_i^t\leq 2f \hat{x}_i^t$. Then $\sum_t\sum_i w^t_i\tilde{x}^t_i\leq 2f \sum_t  \sum_i w^t_i \hat{x}^t_i$.

It remains to bound the transition cost of the solution. In the computed solution, $\tilde{x}^t_i$ jumps (once) from 0 to 1 only if $\hat{x}^t_i\leq \frac{1}{2f}$,  $\hat{x}^{t'}_i\geq \frac{1}{f}$ for some $t'>t$ and   $\frac{1}{2f}< \tilde{x}^{t''}_i< \frac{1}{f}$ for all $t<t''<t'$ (or $t'=t+1$). But then, the global transition cost of $\hat{x}_i$ on the period between $t$ and $t'$ is at least $p_i(\hat{x}^{t'}_i-\hat{x}^{t}_i)\geq \frac{p_i}{2f}$, while the transition cost of  $\tilde{x}_i$ is $p_i$ (one jump from 0 to 1).

Similarly, $\tilde{x}^t_i$ jumps (once) from 1 to 0 only if $\hat{x}^t_i\geq \frac{1}{f}$,  $\hat{x}^{t'}_i\leq \frac{1}{2f}$ for some $t'>t$ and   $\frac{1}{2f}< \hat{x}^{t''}_i< \frac{1}{f}$ for all $t<t''<t'$ (or $t'=t+1$). Again, the global transition cost of $\hat{x}_i$ on the period between $t$ and $t'$ is at least $p_i|\hat{x}^{t'}_i-\hat{x}^{t}_i|\geq \frac{p_i}{2f}$, while the transition cost of  $\tilde{x}_i$ is $p_i$. Globally, the transition cost of the computed solution is at most $2f$ times the one of the optimal (continuous) solution.
\end{proof}



\subsection{Prize-Collecting Steiner Tree}\label{sec:st}


In this section, we show how to use the 2-threshold  rounding scheme as an important ingredient in order to derive approximation algorithms for the Prize-Collecting Steiner Tree problem. In the (static)  Prize-Collecting Steiner Tree problem we have: a graph $G=(V,E)$, a root $r\in V$, each edge has a cost $c(e)\geq 0$, each vertex $v$ has a penalty $\pi(v)\geq 0$. We assume the graph to be complete (w.l.o.g., by putting large weights). Given a subset of edges $F\subseteq E$, we denote $C(F)$ the set of vertices connected to the root by this set of edges. The value $z(F)$ associated to $F$ is $z(F)=\sum_{e\in F} c(e) +\sum_{v\not\in C(F)}\pi(v)$. Given that costs are nonnegative, there always exists an optimal solution which forms a tree (with the root inside).

We formulate the (static) problem with the following ILP, where $x_e=1$ if $e$ is taken in the solution, and $s_v=1$ if $v$ is connected to the root. We denote by $\mathcal{P}_v$ the set of sets of vertices $S$ with $v\in S$ and $r\not \in S$. $\delta(S)$ is the set of edges with one endpoint in $S$ and one endpoint outside $S$.

   \begin{eqnarray*}
     \ \left \{ \begin{array}{ll}
    \min \ \sum_{e \in E} c(e) x_{e}+\sum_{v\in V} \pi(v)(1-s_v)   \\
    s.t. \left |
    \begin{array}{llllll}
    \sum_{e \in \delta(S)} x_{e} & \geq &s_v & \forall{v} \in V, \forall S \in \mathcal{P}_v \\
    x_{e},s_v \in \{0,1\} \\
    \end{array}
    \right.
    \end{array} 
    \right.
    \end{eqnarray*}

The constraint ensures that if we choose to connect $v$ to the root ($s_v=1$),  $\forall S \in \mathcal{P}_v$ at least one edge goes outside $S$. \\

We consider a multistage version where we have a transition cost induced by modifying our decision about vertex $v$ between time steps $t$ and $t+1$. Namely, there is a cost $w_v$ if we connect $v$ to the root at time step $t$, but not at time step $t+1$, or vice-versa. We formulate the problem as an ILP as follows: $x^t_e=1$ if edge $e$ is taken at time $t$, $s^t_v=1$ if $v$ is connected to $r$ at time $t$, $z^t_v=1$ if $s^t_v\neq s^{t+1}_v$ (transition cost).

   \begin{eqnarray*}
     \ \left \{ \begin{array}{ll}
    \min \ \sum_{t=1}^T\sum_{e \in E} c^t(e) x^t_{e}+\sum_{t=1}^T\sum_{v\in V} \pi^t(v)(1-s^t_v)+\sum_{t=1}^{T-1}\sum_{v\in V} w_v z^t_v   \\
    s.t. \left |
    \begin{array}{llllll}
    \sum_{e \in \delta(S)} x^t_{e} & \geq &s^t_v & \forall t\in \{1,\dots,T\}, \forall{v} \in V, \forall S \in \mathcal{P}_v \\
    z^t_v & \geq & s^t_v-s^{t+1}_v & \forall t\in \{1,\dots,T-1\}, \forall{v} \in V \\
    z^t_v & \geq & s^{t+1}_v - s^t_v & \forall t\in \{1,\dots,T-1\}, \forall{v} \in V \\
    x^t_{e},s^t_v, z^t_v \in \{0,1\} \\
    \end{array}
    \right.
    \end{array} 
    \right.
    \end{eqnarray*}

Let us consider the algorithm \texttt{RS-MPCST} (Rouding Scheme for Multistage Prize-Collecting Steiner Tree) which works as follows.

\begin{enumerate}
	\item Find an optimal solution $(\hat{x},\hat{s},\hat{z})$ to the relaxation of the ILP where variables are in $[0,1]$.
	\item Apply the rounding scheme \texttt{RS} to each sequence of variables $\hat{s}_u=(s^1_u,\dots,s^T_u)$ with parameters $\alpha$ and $\beta$ (to be specified), let $\tilde{s}_u=(\tilde{s}^1_u,\dots,\tilde{s}^T_u)$ be the corresponding (integer) values. At this point, we have decided for each time $t$ which vertices we want to connect to the root $r$ (those for which $\tilde{s}^t_u=1$).
	\item For each $t=1,\dots,T$, apply the algorithm \texttt{ST-Algo} of \cite{GW} for the Steiner Tree problem on the instance whose costs on edges are the ones at time $t$, and the set of terminals is $\{v:\tilde{s}^t_v=1\}$ (those we have to connect). This gives the set of edges chosen at time $t$.
\end{enumerate}

The algorithm \texttt{ST-Algo} has the following property \cite{GW}: it computes a (intregal) solution the value of which is at most $2$ times the value of a (feasible) solution of a dual formulation of the problem. More precisely, consider the following ILP formulation of an instance of Steiner Tree, where $T$ is the set of terminals, and $\mathcal{P}$ is the set of sets $S$ such that $S\cap T\neq \emptyset$ and $r\not\in S$.

   \begin{eqnarray*}
     (ILP-ST) \  \left \{ \begin{array}{ll}
    \min \ \sum_{e \in E} c(e) x_{e}   \\
    s.t. \left |
    \begin{array}{llllll}
    \sum_{e \in \delta(S)} x_{e} & \geq &1 & \forall S \in \mathcal{P} \\
    x_{e} \in \{0,1\} \\
    \end{array}
    \right.
    \end{array} 
    \right.
    \end{eqnarray*}

Associating a variable $y_S$ to each $S \in \mathcal{P}$, the following LP $(D-ST)$ is the dual of the relaxation (LP-ST) of (ILP-ST) where we relax $x_{e} \in \{0,1\}$ as $x_e\geq 0$.

   \begin{eqnarray*}
     (D-ST) \left \{ \begin{array}{ll}
    \max \ \sum_{S \in \mathcal{P}}  y_{S}   \\
    s.t. \left |
    \begin{array}{llllll}
    \sum_{S\in \mathcal{P}:e \in \delta(S)} x_{e} & \geq &c(e) & \forall e \in E \\
    y_{e}\geq 0 \\
    \end{array}
    \right.
    \end{array} 
    \right.
    \end{eqnarray*}

Then \texttt{ST-Algo} computes a feasible solution $y_S$ of the dual and an integral solution $\hat{x}$ of the primal such that $\sum_{e}c(e)\hat{x}_e\leq 2 \sum_S y_S$. 

\begin{theorem}
With parameters $\alpha=3/4$ and $\beta=1/2$, \texttt{RS-MPCST} is a $4$-approximation algorithm.
\end{theorem}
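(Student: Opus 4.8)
The plan is to bound separately the three contributions to the objective---edge costs, penalties, and transition costs---each by $4$ times the corresponding contribution of an optimal fractional solution $(\hat{x},\hat{s},\hat{z})$ of the relaxation, and then to use that its value $OPT_{LP}$ is a lower bound on the integral optimum $OPT$. The two facts I rely on everywhere are consequences of $\alpha=3/4$, $\beta=1/2$: if $\tilde{s}^t_v=1$ then either $\hat{s}^t_v\geq\alpha$ or $\hat{s}^t_v\in(\beta,\alpha)$, so in all cases $\hat{s}^t_v>\beta=1/2$; symmetrically, if $\tilde{s}^t_v=0$ then $\hat{s}^t_v<\alpha=3/4$.

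First I would dispatch feasibility and the penalty term, which are immediate. Feasibility holds because at each time $t$ the terminal set $R^t=\{v:\tilde{s}^t_v=1\}$ can always be connected by \texttt{ST-Algo} on the complete graph. For penalties, $\tilde{s}^t_v=0$ forces $\hat{s}^t_v<3/4$, hence $1-\tilde{s}^t_v=1\leq 4(1-\hat{s}^t_v)$, so the total rounded penalty is at most $4\sum_t\sum_v\pi^t(v)(1-\hat{s}^t_v)$. Next I would bound the edge cost at a fixed $t$. Since every $v\in R^t$ has $\hat{s}^t_v>1/2$, the scaled vector $2\hat{x}^t$ satisfies $\sum_{e\in\delta(S)}2\hat{x}^t_e\geq 2\hat{s}^t_v\geq 1$ for every $S$ that separates a terminal of $R^t$ from $r$; thus $2\hat{x}^t$ is feasible for (LP-ST) on $R^t$, so the optimum $LP^t(R^t)$ of (LP-ST) is at most $2\sum_e c^t(e)\hat{x}^t_e$. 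Combining with the guarantee of \texttt{ST-Algo} (integral cost at most $2\sum_S y_S\leq 2\,LP^t(R^t)$ by weak duality) gives edge cost at time $t$ at most $4\sum_e c^t(e)\hat{x}^t_e$; summing over $t$ yields the factor $4$ on edges.

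The main obstacle is the transition cost, which is exactly what the two-threshold scheme is built to control. Here I must show that for each vertex $v$ the number $k_v$ of switches of the rounded sequence $\tilde{s}_v$ is at most $\tfrac{1}{\alpha-\beta}=4$ times the total variation $\sum_t|\hat{s}^{t+1}_v-\hat{s}^t_v|=\sum_t\hat{z}^t_v$ of the fractional sequence. The structural facts I would first establish are: a maximal block rounded to $1$ that does not touch $t=1$ or $t=T$ must begin and end at a time where $\hat{s}_v\geq\alpha$ (otherwise condition (1) or (2) of the scheme would fail and force the block to $0$); and symmetrically a maximal $0$-block flanked on both sides by $1$-blocks, as well as a $0$-block abutting a time boundary, must contain a time where $\hat{s}_v\leq\beta$ (if it consisted of a single interval in $(\beta,\alpha)$ then both boundary conditions would hold and it would have been rounded to $1$). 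Using these, I attach to each switch a witness time-interval running between a time with $\hat{s}_v\leq\beta$ and a time with $\hat{s}_v\geq\alpha$, on which the fractional movement is at least $\alpha-\beta=1/4$; because switches alternate direction, these witness intervals can be chosen edge-disjoint, so additivity of total variation gives $k_v(\alpha-\beta)\leq\sum_t|\hat{s}^{t+1}_v-\hat{s}^t_v|$, i.e. $k_v\leq 4\sum_t\hat{z}^t_v$. The boundary blocks need no separate treatment since the cases $t_1=1$ and $t_2=T$ in the scheme already supply the required extreme values.

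Finally I would assemble the three bounds: the rounded solution has edge cost at most $4\sum_t\sum_e c^t(e)\hat{x}^t_e$, penalty at most $4\sum_t\sum_v\pi^t(v)(1-\hat{s}^t_v)$, and transition cost $\sum_v w_v k_v\leq 4\sum_t\sum_v w_v\hat{z}^t_v$. Adding these gives total rounded cost at most $4\,OPT_{LP}\leq 4\,OPT$, proving that \texttt{RS-MPCST} with $\alpha=3/4$, $\beta=1/2$ is a $4$-approximation. I expect the combinatorial switch-counting argument for the transition term to be the delicate part, while the edge and penalty bounds follow routinely from the threshold inequalities and the guarantee of \texttt{ST-Algo}.
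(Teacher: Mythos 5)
Your proposal is correct and follows essentially the same route as the paper: the same three-way decomposition (penalties bounded via $1/(1-\alpha)=4$, edge costs via scaling $\hat{x}^t$ by $1/\beta$ to get an LP-ST-feasible point and invoking the factor-2 primal--dual guarantee of \texttt{ST-Algo}, transitions via charging each switch to a fractional movement of at least $\alpha-\beta=1/4$ on disjoint witness intervals). Your treatment of the switch-counting is somewhat more explicit than the paper's, but it is the same argument.
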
 

\begin{proof}
We note that the LP can be solved in polynomial time by the Ellipsoid method.
The separation problem for the constraints $\sum_{e \in \delta(S)} x^t_{e} \geq s^t_v$ is reduced to the Min Cut problem and can be solved in polynomial time. 

By solving a Steiner Tree problem in step 3, we connect at each time step the vertices we chose (during step 2) to connect to $r$, so the solution is clearly feasible. 

Let $(\tilde{x},\tilde{s})$ be the computed solution ($\tilde{z}$ being immediately deduced from $\tilde{s}$). First, we note that in the rounding scheme, if $\hat{s}^t_i\geq \alpha$ then $\tilde{s}^t_i= 1$, so for any $i,t$ $(1-\tilde{s}^t_i)\leq \frac{1-\hat{s}^t_i}{1-\alpha}$, and we can bound the loss on the cost of not connecting some vertices:

\begin{equation}\label{STeq1}
	\sum_{t=1}^T\sum_{v\in V} \pi^t(v)(1-\tilde{s}^t_v) \leq \frac{\sum_{t=1}^T\sum_{v\in V} \pi^t(v)(1-\hat{s}^t_v)}{1-\alpha}
\end{equation}

Now, as for the case of Set Cover, a variable  $\tilde{s}^t_u$ jumps (once) from 0 to 1 only if $\hat{s}^t_u\leq \beta$,  $\hat{s}^{t'}_u\geq \alpha$ for some $t'>t$ and   $\beta< \hat{s}^{t''}_u< \alpha$ for all $t<t''<t'$ (or $t'=t+1$). But then, the global transition cost of $\hat{s}^t_u$ on this period between $t$ and $t'$ is at least $w_u(\hat{s}^{t'}_v-\hat{s}^t_v)\geq w_u(\alpha-\beta)$, while the transition cost of  $\tilde{s}^t_u$ is $w_u$. The same argument holds for the jumps of $\overline{s}^t_u$ from 1 to 0. Then we can bound the loss on the transition costs: 

\begin{equation}\label{STeq2}
\sum_{t=1}^{T-1}\sum_{v\in V} w_v \tilde{z}^t_v \leq \frac{\sum_{t=1}^{T-1}\sum_{v\in V} w_v \hat{z}^t_v}{\alpha-\beta}
\end{equation}

Now we have to deal with the cost of taking edges. We consider some time step $t$. Since $(\hat{x},\hat{s},\hat{z})$ is feasible,  for all $v \in V$, all $S \in \mathcal{P}_v$:
$$
\sum_{e \in \delta(S)} \hat{x}^t_{e}  \geq \hat{s}^t_v
$$
Now, since $\tilde{s}^t_v=0$ if $\hat{s}^t_v\leq \beta$, we have $\tilde{s}^t_v\leq \frac{\hat{s}^t_v}{\beta}$. Then
$$
\sum_{e \in \delta(S)} \beta\hat{x}^t_{e}  \geq \beta\hat{s}^t_v \geq \tilde{s}^t_v
$$

Considering the formulation (LP-ST) of the Steiner Tree problem where the set of terminals is $T=\{v:\tilde{s}^t_v\}=1$, we have that  $\sum_{e \in \delta(S)} \beta\hat{x}^t_{e}  \geq 1$ for all $S$ containing at least one terminal (one $v$ with $\tilde{s}^t_v=1$), and not containing $r$. In other words,  $\beta\hat{x}^t_{e}$ is a (continuous) feasible solution for (LP-ST) (with  $T=\{v:\tilde{s}^t_v\}=1$). 
By duality, its value is at least the value of any dual feasible solution $y_S$: 

$$\sum_{e \in E} c^t(e) \beta \hat{x}^t_{e}\geq \sum_{S}y_S$$ 

Since \texttt{ST-Algo} computes a solution $\hat{x}$ of cost at most $2\sum_{S}y_S$, we get, for all $t$:
\begin{equation}\label{STeq3}
\sum_{e \in E} c^t(e) \tilde{x}^t_{e} \leq \frac{2}{\beta}\sum_{e \in E} c^t(e) \hat{x}^t_{e}
\end{equation}

In all, the computed solution has ratio at most $\frac{1}{1-\alpha}$ for the cost of not connecting vertices, ratio at most $\frac{1}{\alpha-\beta}$ for the transition costs, and ratio at most $\frac{2}{\beta}$ for the cost of edges. By choosing $\beta=\frac{1}{2}$ and $\alpha=\frac{3}{4}$, we get ratio at most $4$ in each case.
\end{proof}

\subsubsection*{Improvement via (de)randomization} 
Following an approach used for the classical Prize-Collecting Steiner Tree problem (see \cite{WilliamsonBook}) we give here a randomized algorithm for the multistage Prize-collecting Steiner Tree problem leading to a better (expected) ratio. We then show how it can be derandomized. 

The randomized algorithm is the same as \texttt{RS-MPCST} except that we choose $\alpha$ at random uniformly from the range $[\gamma,1],$ where $\gamma = e^{-\frac13} > 0$, and we set $\beta = \frac23 \alpha.$ 
The probability density function for a uniform random variable over $[\gamma,1]$ is the constant $1/(1-\gamma).$ 
We have
\begin{equation}\nonumber
\mathbb{E}[\sum_{t=1}^T\sum_{v\in V} \pi^t(v)(1-\tilde{s}^t_v)]=\mathbb{E}[\sum_{t=1}^T(\sum_{v|\hat{s}^t_v<\gamma} \pi^t(v)(1-\tilde{s}^t_v)
+\sum_{v|\hat{s}^t_v\geq \gamma} \pi^t(v)(1-\tilde{s}^t_v)]
\end{equation}
\begin{equation}\nonumber
= \sum_{t=1}^T\sum_{v|\hat{s}^t_v<\gamma} \pi^t(v) + \sum_{t=1}^T\sum_{v|\hat{s}^t_v\geq \gamma}\left(\mathbb{E}[\pi^t(v) Pr(\overline{s}^t_v=0)] \right)
\end{equation}
\begin{equation}
= \sum_{t=1}^T\left( \sum_{v|\hat{s}^t_v<\gamma} \frac{\pi^t(v)(1-\hat{s}^t_v)}{(1-\hat{s}^t_v)} + \sum_{v|\hat{s}^t_v\geq \gamma} \frac{\pi^t(v)(1-\hat{s}^t_v)}{1-\gamma} \right)
\leq \frac{1}{1-\gamma}\sum_{t=1}^T\sum_{v\in V} \pi^t(v)(1-\hat{s}^t_v).
\end{equation}
From (\ref{STeq2}) and linearity of expectations, we obtain:
\begin{equation}\nonumber
\mathbb{E}[\sum_{t=1}^{T-1}\sum_{v\in V} w_v \tilde{z}^t_v] \leq \sum_{t=1}^{T-1}\sum_{v\in V} w_v \hat{z}^t_v\mathbb{E}[\frac{1}{\alpha-\beta}] 
= \sum_{t=1}^{T-1}\sum_{v\in V} w_v \hat{z}^t_v\mathbb{E}[\frac{3}{\alpha}]
\end{equation}
\begin{equation}
=\left(\frac{3}{1-\gamma}\ln\frac{1}{\gamma}\right) \sum_{t=1}^{T-1}\sum_{v\in V} w_v \hat{z}^t_v
=\frac{1}{1-\gamma} \sum_{v\in V} w_v \hat{z}^t_v.
\end{equation}
Finally, from (\ref{STeq3}) we have 
$$
\mathbb{E}[\sum_{e \in E} c^t(e) \overline{x}^t_{e}] \leq \mathbb{E}[\frac{2}{\beta}\sum_{e \in E} c^t(e) \hat{x}^t_{e}]=\sum_{e \in E} c^t(e) \hat{x}^t_{e}] \mathbb{E}[\frac{3}{\alpha}]
=\frac{1}{1-\gamma}\sum_{e \in E} c^t(e) \hat{x}^t_{e}.
$$
Thus, the expected cost of the obtained solution  is no greater than $\frac{1}{1-e^{-\frac13}}OPT \leq 3.53OPT.$\\

Now, let us deal with derandomization. The random selection of $\alpha$ determines  the splitting of the values $\hat{s}_u^t$ into three sets ${\cal V}^{+}(\alpha)=\{\hat{s}_u^t | \hat{s}_u^t \geq \alpha \} $,
${\cal V}^{-}(\alpha)=\{\hat{s}_u^t | \hat{s}_u^t \leq \beta \},$ and ${\cal V}(\alpha)=\{\hat{s}_u^t | \beta < \hat{s}_u^t < \alpha \}.$ Since $\beta = \frac23 \alpha,$ 
any possible value of $\alpha$ corresponds to a splitting of the set of values into three sets. It is clear that there are at most $(|V|T)^2$ such partitions ($\alpha$ or $\beta$ taking one of the at most $|V|t$ values $\hat{s}_u^t$). Let us call \texttt{I-RS-MPCST} (for improved \texttt{RS-MPCST}) the corresponding derandomized algorithm. Following the above discussion, we get: 

\begin{theorem}
\texttt{I-RS-MPCST} is a (deterministic) 3.53-approximation algorithm. 
\end{theorem}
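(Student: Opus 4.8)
The plan is to turn the expected guarantee of the randomized version of \texttt{RS-MPCST} into a deterministic one by exhausting the finitely many outcomes that the random choice of $\alpha$ can produce. The crucial fact I would establish first is that the integer solution $(\tilde{x},\tilde{s})$ output by the algorithm depends on $\alpha$ only through the induced three-way partition of the fractional values, namely ${\cal V}^{+}(\alpha)$, ${\cal V}^{-}(\alpha)$ and ${\cal V}(\alpha)$. Indeed, once we know for each value $\hat{s}_u^t$ whether it is $\geq \alpha$, $\leq \beta$, or strictly in between, the rounding rule \texttt{RS} is completely determined: in the third case the decision on a maximal interval with $\beta<\hat{s}_u^t<\alpha$ only consults whether its two neighbouring values lie in ${\cal V}^{+}(\alpha)$, which is read off from the partition. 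Consequently $\tilde{s}$ is fixed by the partition, and then step~3 (running \texttt{ST-Algo} on the terminal sets $\{v:\tilde{s}^t_v=1\}$) is deterministic. Hence the cost of the output is a function of the partition alone, and therefore a piecewise-constant function of $\alpha$ on $[\gamma,1]$.

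Given this, the expectation computed in the randomized analysis is exactly a convex combination of the (finitely many) costs attained on the subintervals of $[\gamma,1]$ on which the partition is constant, the weights being the lengths of these subintervals divided by $1-\gamma$ (the density of the uniform law on $[\gamma,1]$). Since that expectation was shown to be at most $\frac{1}{1-e^{-1/3}}OPT\leq 3.53\,OPT$, and a weighted average of nonnegative costs is at least the minimum of those costs, the cheapest achievable partition has cost at most $3.53\,OPT$. This is the standard probabilistic-method step, and it tells us that it suffices to find the best partition.

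It then remains to enumerate all partitions that some $\alpha\in[\gamma,1]$ can create. The partition can only change when $\alpha$ crosses some $\hat{s}_u^t$ (altering membership in ${\cal V}^{+}$), or when $\beta=\frac{2}{3}\alpha$ crosses some $\hat{s}_u^t$, i.e., when $\alpha=\frac{3}{2}\hat{s}_u^t$ (altering membership in ${\cal V}^{-}$). There are at most $2|V|T$ such breakpoints, so at most $O(|V|T)$ distinct partitions arise (the bound $(|V|T)^2$ quoted above is a loose count of the pairs of threshold values). \texttt{I-RS-MPCST} simply runs the rounding and the Steiner-tree step for one representative value of $\alpha$ in each of these intervals, and returns the cheapest solution produced. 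Each run is polynomial (the LP is solved in polynomial time as before, and \texttt{ST-Algo} is polynomial), so the whole procedure is deterministic and polynomial, and by the previous paragraph its output costs at most $3.53\,OPT$.

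The main thing to get right --- and the only place the argument could slip --- is the first paragraph: one must verify carefully that the output really is invariant across all $\alpha$ inducing the same partition, including the boundary conventions of \texttt{RS} at $t=1$ and $t=T$ and the treatment of values $\hat{s}_u^t$ that coincide exactly with $\alpha$ or $\beta$. Everything after that (the averaging inequality and the breakpoint count) is routine.
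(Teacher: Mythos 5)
Your proposal is correct and follows essentially the same route as the paper: argue that the rounded solution depends on $\alpha$ only through the induced partition $({\cal V}^{+},{\cal V},{\cal V}^{-})$, use the averaging (probabilistic-method) step to conclude that the best partition achieves the expected bound $\frac{1}{1-e^{-1/3}}OPT$, and enumerate the finitely many partitions via the breakpoints $\hat{s}_u^t$ and $\frac{3}{2}\hat{s}_u^t$. Your $2|V|T$ breakpoint count even matches the refinement the paper gives in its remark, improving on the cruder $(|V|T)^2$ bound used in the main text.
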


\begin{remark}
The derandomization can be done in a more efficient way, by considering no more than $2|V|T$ different values of $\alpha.$ 
For each $s_u^t > 0$, we set $\alpha(u,t)=s_u^t $ and $\alpha'(u,t)=\frac32 s_u^t.$ 
Let ${\cal A}=\{\alpha(u,t)|s_u^t > 0, u \in V, t=1, \ldots,T\} \cup \{\alpha'(u,t)|s_u^t > 0, u \in V, t=1, \ldots,T\}.$ 
Consider an arbitrary $\alpha \in [e^{-\frac13},1].$ We show that there exists $\alpha' \in {\cal A}$ such that ${\cal V}^{+}(\alpha)={\cal V}^{+}(\alpha'),$
${\cal V}^{-}(\alpha)={\cal V}^{-}(\alpha'),$ and ${\cal V}(\alpha)={\cal V}(\alpha').$ Let $\eta_0=\min\{s_u^t|s_u^t\in{\cal V}\}$ and 
$\eta_1=\min\{s_u^t|s_u^t\in{\cal V}^{+}\}.$  We have $\frac32 \eta_0 \in {\cal A}$ and $\eta_1 \in {\cal A}.$
If 
\begin{equation} \label{dereq1}
\eta_1 - \alpha \leq \frac32(\eta_0 - \beta),
\end{equation} 
then we set $\alpha' = \eta_1,$ so $\beta'=\frac23 \eta_1.$  It is clear that ${\cal V}^{+}(\alpha)={\cal V}^{+}(\alpha').$
From (\ref{dereq1}) we have
$$ \eta_0 \geq \frac23(\eta_1-\alpha)+\beta \geq \frac23(\eta_1-\alpha)+\frac23 \alpha = \frac23 \eta_1 = \beta'.$$
Hence, ${\cal V}^{-}(\alpha)={\cal V}^{-}(\alpha'),$ and ${\cal V}(\alpha)={\cal V}(\alpha').$ If
\begin{equation} \label{dereq2}
\eta_1 - \alpha > \frac32(\eta_0 - \beta),
\end{equation} 
then we set $\alpha' =\frac32 \eta_0.$ Then $\beta' = \eta_0$ and we obtain that ${\cal V}^{-}(\alpha)={\cal V}^{-}(\alpha').$
From (\ref{dereq2}) we have
$$ \eta_1  > \frac32(\eta_0 - \beta) +  \alpha = \frac32 \eta_0 = \alpha'.$$
It follows that ${\cal V}^{+}(\alpha)={\cal V}^{+}(\alpha'),$ and ${\cal V}(\alpha)={\cal V}(\alpha').$
\end{remark}

\subsection{Prize-Collecting Traveling Salesman}\label{sec:tsp}

In this section we consider the Prize-Collecting Traveling Salesman problem. We have a complete graph $G=(V,E)$,  a depot $r\in V$, 
each edge has a cost $c(e)$, each vertex $v$ has a penalty $\pi(v).$ We assume that the vertex $r$ must be in the tour, i.e.  the tour starts and ends at vertex $r$. The edge costs are assumed to satisfy the triangle inequality.
In the Price Collecting Traveling Salesman problem, it is required to find a tour that visits a subset of the vertices 
such that the length of the tour plus the sum of penalties of all vertices not in the tour  is as small as possible. 
We consider the multistage version of the problem in which the costs of edges and the penalties may change over time. 
Additionally, we have a transition cost induced by modifying our decision about vertex $v$ between time $t$ and $t+1$. Namely, we pay
a cost $w_v$ if we visit $v$ at time step $t$ but not at time step $t+1$, or vice-versa. 

We adapt the ILP for the Prize-Collecting Travelling Salesman problem introduced in~\cite{Bienstock}. 
Let $G'=(V',E')$ be the complete graph  resulting from $G$ by adding a dummy vertex $r'.$ We set $c^t(\{r,r'\})=0$ and $c^t(\{v,r'\})=c^t(\{v,r\})$ for all $v\in V \setminus \{r\}.$
Let $x^t_e=1$ if edge $e$ is in the tour at time $t$ and zero otherwise, $s^t_v=1$ if $v$ is in the tour  at time $t$ and zero otherwise, 
$z^t_v=1$ if $s^t_v\neq s^{t+1}_v$ (transition cost).

\begin{eqnarray*}
    ILP-MPCTSP \ \left \{ \begin{array}{ll}
    \min \ \sum_{t=1}^T\sum_{e \in E'} c^t(e) x^t_{e}+\sum_{t=1}^T\sum_{v\in V} \pi^t(v)(1-s^t_v)+\sum_{t=1}^{T-1}\sum_{v\in V} w_v z^t_v   \\
    s.t. \left |
    \begin{array}{llllll}
    \sum_{e \in \delta(\{v\})} x^t_{e} & = &2s^t_v & \forall t\in \{1,\dots,T\}, \forall{v} \in V' \\
    \sum_{e \in \delta(S)} x^t_{e} & \geq &2s^t_v & \forall t\in \{1,\dots,T\}, \forall{v} \in V, \forall S \subset V \\
    & & &   \mbox{such that} \, \, \,  |S\cap \{r',v\}|=1 \\
    z^t_v & \geq & s^t_v-s^{t+1}_v & \forall t\in \{1,\dots,T-1\}, \forall{v} \in V \\
    z^t_v & \geq & s^{t+1}_v - s^t_v & \forall t\in \{1,\dots,T-1\}, \forall{v} \in V \\
    x^t_{e},s^t_v, z^t_v \in \{0,1\} \\
    \end{array}
    \right.
    \end{array} 
    \right.
    \end{eqnarray*}
    
    Let us consider the algorithm \texttt{RS-MPCTSP} (Rounding Scheme for multistage Prize-Collecting TSP).

\begin{enumerate}
	\item Find an optimal solution $(\hat{x},\hat{s},\hat{z})$ to the relaxation of ILP-MPCTSP where variables are in $[0,1]$.
	\item Apply the rounding scheme \texttt{RS} to each sequence of variables $\hat{s}_u=(\hat{s}^1_u,\dots,\hat{s}^T_u)$ with parameters $\alpha$ and $\beta$ (to be specified), 
	let $\tilde{s}_u=(\tilde{s}^1_u,\dots,\tilde{s}^T_u)$ be the corresponding (integer) values. 
	\item Let $R_t=\{v:\tilde{s}^t_v=1\}.$ For each $t=1,\dots,T$, construct a Traveling Salesman tour through all vertices in $R_t$ using Christofides' algorithm \cite{Christophides}. 
\end{enumerate}

\begin{theorem}
With parameters $\alpha=5/7$ and $\beta=3/7$, \texttt{RS-MPCTSP} is a $3.5$-approximation algorithm.
\end{theorem}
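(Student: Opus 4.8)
The plan is to mimic the structure of the Steiner Tree proof (Theorem on \texttt{RS-MPCST}), splitting the cost of the computed solution into three parts and bounding each part separately as a function of $\alpha$ and $\beta$: the penalty cost of unconnected/unvisited vertices, the transition cost, and the tour (edge) cost. The first two parts are handled exactly as before. For the penalty cost, since $\hat{s}^t_v\geq \alpha$ forces $\tilde{s}^t_v=1$, we get $(1-\tilde{s}^t_v)\leq \frac{1-\hat{s}^t_v}{1-\alpha}$, so this part inflates by at most $\frac{1}{1-\alpha}$. For the transition cost, the rounding scheme makes $\tilde{s}^t_v$ jump only across a maximal interval where $\hat{s}$ crosses from $\leq\beta$ to $\geq\alpha$ (or vice versa), so each unit of computed transition is charged against at least $w_v(\alpha-\beta)$ of fractional transition; hence this part inflates by at most $\frac{1}{\alpha-\beta}$, exactly as in (\ref{STeq2}).

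The genuinely new work is bounding the tour cost produced by Christofides' algorithm in step 3. The strategy is to show that for each fixed $t$, the fractional solution $\hat{x}^t$ (suitably scaled) is feasible for an LP relaxation of TSP on the terminal set $R_t=\{v:\tilde{s}^t_v=1\}$, so that its scaled value lower-bounds the optimal (integral) tour, and then to pay the Christofides factor of $\frac{3}{2}$ on top. Concretely, for $v$ with $\tilde{s}^t_v=1$ we have $\hat{s}^t_v>\beta$, so $1\leq \frac{\hat{s}^t_v}{\beta}$; combined with the cut constraint $\sum_{e\in\delta(S)}\hat{x}^t_e\geq 2\hat{s}^t_v$ this shows $\frac{1}{\beta}\hat{x}^t$ satisfies the subtour-type constraints $\sum_{e\in\delta(S)}x_e\geq 2$ for every $S$ separating a terminal from $r'$. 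I would then invoke the known fact (from the Held--Karp / subtour LP analysis underlying Christofides-type roundings, as used in \cite{Bienstock}) that such a fractional point lower-bounds the Held--Karp bound, and that Christofides returns a tour of cost at most $\frac{3}{2}$ times this bound. This yields an edge-cost inflation of at most $\frac{3}{2\beta}$ per time step, hence overall.

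The main obstacle, therefore, is the edge-cost step: one must argue carefully that $\frac{1}{\beta}\hat{x}^t$ restricted to the terminals $R_t$ is a feasible fractional point from which Christofides' $\frac{3}{2}$-guarantee can be drawn, accounting for the dummy vertex $r'$ and the fact that the original constraints are stated with a factor $2$ and over all $S$ with $|S\cap\{r',v\}|=1$. The degree constraints $\sum_{e\in\delta(\{v\})}x^t_e=2s^t_v$ become relevant here to ensure the restriction behaves like a genuine TSP relaxation on $R_t\cup\{r\}$. Once this is in place, the three inflation factors are $\frac{1}{1-\alpha}$, $\frac{1}{\alpha-\beta}$, and $\frac{3}{2\beta}$, and I would finish by plugging in $\alpha=5/7$, $\beta=3/7$:
\begin{equation*}
\frac{1}{1-\alpha}=\frac{7}{2}=3.5,\qquad \frac{1}{\alpha-\beta}=\frac{7}{2}=3.5,\qquad \frac{3}{2\beta}=\frac{7}{2}=3.5,
\end{equation*}
so every part is bounded by $3.5\cdot OPT$ and the theorem follows. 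I would expect the subsequent (de)randomization to then improve the constant to $3.034$, in direct analogy with the Steiner Tree case.
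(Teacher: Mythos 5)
Your proposal is correct and follows essentially the same route as the paper: the penalty and transition costs are bounded by $\frac{1}{1-\alpha}$ and $\frac{1}{\alpha-\beta}$ exactly as in the Steiner Tree analysis, and the tour cost is handled by showing $\frac{1}{\beta}\hat{x}^t$ is feasible for the Bienstock et al.\ cut LP on $R_t$, whose optimum equals the Held--Karp bound, against which Christofides loses a factor $\frac32$, giving $\frac{3}{2\beta}$. The step you flag as the ``main obstacle'' is resolved in the paper precisely as you anticipate, and the three ratios all equal $3.5$ at $\alpha=5/7$, $\beta=3/7$.
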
 
\begin{proof}
Let $L^{\ast}(R_t)$ be the length of an optimal tour through the subset of vertices $R_t$ and
$L(R_t)$ be the length of the tour through the subset of vertices $R_t$ produced by  Christofides' algorithm. 
The following linear program was introduced by Bienstock et al. \cite{Bienstock}.

\begin{eqnarray*}
    LP-BGSW(t) \ \left \{ \begin{array}{ll}
    \min \ \sum_{e \in E'} c^t(e) x^t_{e}  \\
    s.t. \left |
    \begin{array}{lllll}
    \sum_{e \in \delta(S)} x^t_{e} & \geq &2&   \forall S \subset V' : R_t\cap S\ne \emptyset, \, R_t\cap(V' \setminus S) \ne \emptyset\\
    x^t_{e} & \geq &0 & \forall e \in E'\\
    \end{array}
    \right.
    \end{array} 
    \right.
    \end{eqnarray*}

Let $\bar{x}$ be an optimal solution of LP-BGSW and $Z(\bar{x})$ be the cost of this solution.
Bienstock et al. proved that $Z(\bar{x})$ coincides with the well-known lower bound obtained by Held and Karp \cite{Held}. 
It follows that $Z(\bar{x}) \leq L^{\ast}(R_t)$ and $L(R_t) \leq \frac32 Z(\bar{x}).$ The second inequality follows from 
a similar result obtained for the Held and Karp lower bound obtained in \cite{Shmoys,Wolsey}.  Thus, we have 
$$ L(R_t) \leq \frac32 \sum_{e \in E'} c^t(e) \bar{x}^t_{e}.$$

Define new vectors $\tilde{x}^t_{e}=\frac{\hat{x}^t_{e}}{\beta}$ for all $e \in E$ and $t=1,\ldots,T.$ 
Consider some time step $t.$ We now show that $\tilde{x}^t_{e}$ is feasible for  LP-BGSW(t).
To prove that the first constraints  are satisfied, we consider any $S \subset V'$ such that $v \in R_t\cap S$ and $r' \in R_t \setminus S.$
In this case, we have
$$ \sum_{e \in \delta(S)} \tilde{x}^t_{e} = \frac{1}{\beta} \sum_{e \in \delta(S)} \hat{x}^t_{e} \geq \frac{2\hat{s}^t_v}{\beta} \geq 2.$$
The first inequality follows from the feasibility of $(\hat{x},\hat{s},\hat{z}).$ According to the rounding scheme \texttt{RS} $\hat{s}^t_v \geq \beta$
for all $v \in R_t$ and the last inequality also holds. From optimality of $\bar{x}$ we obtain 
$$  \sum_{e \in E'} c^t(e) \bar{x}^t_{e} \leq  \sum_{e \in E'} c^t(e)  \tilde{x}^t_{e}.$$
Hence, 
\begin{equation} \label{eq-TSP}
L(R_t) \leq \frac32 \sum_{e \in E'} c^t(e) \bar{x}^t_{e} \leq \frac32   \sum_{e \in E'} c^t(e)  \tilde{x}^t_{e} 
\leq \frac{3}{2\beta} \sum_{e \in E'} c^t(e)  \hat{x}^t_{e}.
\end{equation} 
 
 Repeating the reasoning of the previous section we obtain that
 the computed solution has a ratio of at most $\frac{1}{1-\alpha}$ for the cost of the set of the unserved  vertices, and a ratio of at most $\frac{1}{\alpha-\beta}$ for the transition costs.
 Finally, by choosing $\beta=\frac{3}{7}$ and $\alpha=\frac{5}{7}$, we get a ratio of at most $3.5$ in each case.
     \end{proof}

\subsubsection*{Improvement via (de)randomization}
We use the same technique as the one of multistage Prize-Collecting Steiner Tree in order to improve the ratio. First we consider a randomized algorithm where we choose $\alpha$ uniformly from the range $[\gamma,1],$ where $\gamma = e^{-\frac25} > 0$ and set $\beta = \frac35 \alpha.$ 
The probability density function for a uniform random variable over $[\gamma,1]$ is the constant $1/(1-\gamma).$ 
As previously, we have:
\begin{equation}
\mathbb{E}[\sum_{t=1}^T\sum_{v\in V} \pi^t(v)(1-\overline{s}^t_v)] \leq \frac{1}{1-\gamma}\sum_{t=1}^T\sum_{v\in V} \pi^t(v)(1-\hat{s}^t_v).
\end{equation}
From (\ref{STeq2}) and linearity of expectations, we obtain:
\begin{equation}\nonumber
\mathbb{E}[\sum_{t=1}^{T-1}\sum_{v\in V} w_v \overline{z}^t_v] \leq \sum_{t=1}^{T-1}\sum_{v\in V} w_v \hat{z}^t_v\mathbb{E}[\frac{1}{\alpha-\beta}] 
= \sum_{t=1}^{T-1}\sum_{v\in V} w_v \hat{z}^t_v\mathbb{E}[\frac{5}{2\alpha}]
\end{equation}
\begin{equation}
=\left(\frac{5}{2(1-\gamma)}\ln\frac{1}{\gamma}\right) \sum_{t=1}^{T-1}\sum_{v\in V} w_v \hat{z}^t_v
=\frac{1}{1-\gamma} \sum_{v\in V} w_v \hat{z}^t_v.
\end{equation}
Finally from (\ref{eq-TSP}), we have: 
$$
\mathbb{E}[\sum_{e \in E} c^t(e) \overline{x}^t_{e}] \leq \mathbb{E}[\frac{3}{2\beta}\sum_{e \in E} c^t(e) \hat{x}^t_{e}]=\sum_{e \in E} c^t(e) \hat{x}^t_{e}] \mathbb{E}[\frac{5}{2\alpha}]
=\frac{1}{1-\gamma}\sum_{e \in E} c^t(e) \hat{x}^t_{e}.
$$
Thus, the expected cost of the obtained solution  is no greater than $\frac{1}{1-e^{-\frac25}}OPT \leq 3.034OPT.$\\

The derandomization can be done in a very similar way as for Prize-Collecting Steiner Tree, leading to the algorithm  \texttt{I-RS-MPCTSP} and the following result.
\begin{theorem}
\texttt{I-RS-MPCTSP} is a (deterministic) 3.034-approximation algorithm. 
\end{theorem}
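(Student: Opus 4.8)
The plan is to derandomize the randomized version of \texttt{RS-MPCTSP} described immediately above, whose expected cost was shown to be at most $\frac{1}{1-e^{-2/5}}OPT\le 3.034\,OPT$ when $\alpha$ is drawn uniformly from $[\gamma,1]$ with $\gamma=e^{-2/5}$ and $\beta=\frac35\alpha$. My starting point is the standard averaging observation: since the expected cost over the random choice of $\alpha$ is at most $3.034\,OPT$, there must exist at least one value of $\alpha\in[\gamma,1]$ for which the deterministic run of the algorithm produces a solution of cost at most $3.034\,OPT$. So it suffices to exhibit a polynomial-size set of candidate values of $\alpha$ guaranteed to contain such a good value, run the deterministic algorithm on each candidate, and return the cheapest solution obtained.

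First I would argue that the computed solution depends on $\alpha$ (and hence on $\beta=\frac35\alpha$) only through the induced tripartition of the fractional values $\hat{s}_u^t$ into ${\cal V}^{+}(\alpha)=\{\hat{s}_u^t\mid \hat{s}_u^t\ge\alpha\}$, ${\cal V}^{-}(\alpha)=\{\hat{s}_u^t\mid \hat{s}_u^t\le\beta\}$ and ${\cal V}(\alpha)=\{\hat{s}_u^t\mid \beta<\hat{s}_u^t<\alpha\}$. Once this tripartition is fixed, the output of \texttt{RS} on every sequence $\hat{s}_u$ is determined, so the terminal sets $R_t$ are fixed, and therefore Christofides' tours and the entire solution (together with its cost, including the edge, penalty and transition terms) are fixed as well. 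Thus any two values of $\alpha$ yielding the same tripartition produce identical solutions, and only the finitely many distinct tripartitions matter.

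Next I would bound the number of distinct tripartitions. As $\alpha$ increases over $[\gamma,1]$, the membership of a given $\hat{s}_u^t$ can only change when $\alpha$ crosses $\hat{s}_u^t$ (affecting ${\cal V}^{+}$) or when $\beta=\frac35\alpha$ crosses $\hat{s}_u^t$, i.e. when $\alpha=\frac53\hat{s}_u^t$ (affecting ${\cal V}^{-}$). Consequently the candidate set ${\cal A}=\{\hat{s}_u^t\}\cup\{\tfrac53\hat{s}_u^t\}$, of size at most $2|V|T$, contains a representative value of $\alpha$ for every tripartition that can arise; testing each element of ${\cal A}$ and keeping the cheapest solution then runs in polynomial time. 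Combined with the averaging argument, the best enumerated solution has cost at most $3.034\,OPT$, giving the theorem.

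The step I expect to be the main obstacle is proving rigorously that ${\cal A}$ (with $\beta=\frac35\alpha$) realizes \emph{every} tripartition obtainable by some $\alpha\in[\gamma,1]$, so that no achievable good split is missed. I would establish this by the same exchange argument as in the Remark for Prize-Collecting Steiner Tree, re-run with the ratio $\beta=\frac35\alpha$ in place of $\frac23\alpha$: given an arbitrary $\alpha$, set $\eta_0=\min\{\hat{s}_u^t\in{\cal V}(\alpha)\}$ and $\eta_1=\min\{\hat{s}_u^t\in{\cal V}^{+}(\alpha)\}$, and according to whether $\eta_1-\alpha\le\frac53(\eta_0-\beta)$ or not, snap $\alpha$ either to $\eta_1$ or to $\frac53\eta_0$, checking in each case that ${\cal V}^{+}$, ${\cal V}^{-}$ and ${\cal V}$ are preserved. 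This is the only genuinely delicate bookkeeping, and it is routine once the coefficient $\frac32$ of the Steiner Tree Remark is replaced by $\frac53$; the remaining cost estimates are exactly those already derived for the randomized analysis.
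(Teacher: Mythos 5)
Your proposal is correct and follows essentially the same route as the paper: the paper itself only states that the derandomization is done ``in a very similar way'' as for Prize-Collecting Steiner Tree, and your argument is exactly that adaptation --- averaging over the uniform choice of $\alpha$, observing that the output depends only on the tripartition of the values $\hat{s}_u^t$, enumerating the at most $2|V|T$ candidate thresholds $\{\hat{s}_u^t\}\cup\{\tfrac53\hat{s}_u^t\}$, and rerunning the snapping argument of the paper's Remark with the ratio $\beta=\tfrac35\alpha$. The coefficient bookkeeping you describe (replacing $\tfrac32$ by $\tfrac53$) checks out, so no gap remains.
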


\bibliography{biblio}

\end{document}